\theoremstyle{plain}
\newtheorem{theorem}{Theorem}
\newtheorem{lemma}{Lemma}
\newtheorem{proposition}{Proposition}
\newtheorem{corollary}{Corollary}
\theoremstyle{definition}
\newtheorem{definition}{Definition}
\newtheorem{remark}{Remark}
\def\tsc#1{\csdef{#1}{\textsc{\lowercase{#1}}\xspace}}
\begin{document}
\let\WriteBookmarks\relax
\def\floatpagepagefraction{1}
\def\textpagefraction{.001}
\shorttitle{Distributed Phase Estimation Algorithm and Distributed Shor's Algorithm}
\shortauthors{L. Xiao and D. Qiu et~al.}

\title [mode = title]{Distributed Phase Estimation Algorithm and Distributed Shor's Algorithm}                      





                

                



                \author[1]{Ligang Xiao}[style=chinese]
\credit{Methodology, Writing - original draft}

\author[1]{Daowen Qiu}[style=chinese]
\cormark[1]
\ead{issqdw@mail.sysu.edu.cn}
\credit{Conceptualization of this study, Ideas and Writing, Review, Editing}

\author[2]{Le Luo}[style=chinese]
\credit{Review}

\author[3]{Paulo Mateus}
\credit{Review}

\affiliation[1]{organization={School of Computer Science and Engineering, Sun Yat-sen University},
                city={Guangzhou},
                postcode={510006}, 
                country={China}}
                

                

\affiliation[2]{organization={School of Physics and Astronomy, Sun Yat-sen University},
			city={Zhuhai},
			postcode={519082},
                country={China}}

\affiliation[3]{organization={Instituto de Telecomunica\c{c}\~{o}es, Departamento de Matem\'{a}tica, Instituto Superior T\'{e}cnico},
			addressline={Av. Rovisco Pais 1049-001},
			addresslinesep={},
			city={Lisbon},
                country={Portugal}}

\cortext[cor1]{Corresponding author}

\begin{abstract}
Shor's algorithm is one of the most significant quantum algorithms. Shor's algorithm can factor large integers with a certain success probability in polynomial time. However, Shor's algorithm requires an unbearable amount of qubits in the NISQ (Noisy Intermediate-scale Quantum) era. To reduce the resources required for Shor's algorithm, in this paper we first propose a new distributed phase estimation algorithm. Our distributed phase estimation algorithm does not require quantum communication and it reduces the number of qubits of a single node compared to the traditional phase estimation algorithm (non-iterative version). Then we apply our distributed phase estimation algorithm to form a distributed order-finding algorithm for Shor's algorithm. Compared with the traditional Shor's algorithm (non-iterative version), the maximum number of qubits required by a single node of our dristributed order-finding algorithm is reduced by  $(2-\dfrac{2}{k})L-\log_2k-O(1)$ when factoring an $L$-bit integer ($k$ is the number of compute nodes). The communication complexity of our distributed order-finding algorithm is $O(kL)$. 
\end{abstract}


\begin{keywords}
Shor's algorithm \sep Distributed Shor's algorithm \sep Distributed phase estimation algorithm
\end{keywords}

\maketitle

\section{Introduction}\label{sec:introduction}

	Quantum computing is rapidly developing and has shown impressive advantages over classical computing in factoring larger integer \cite{shor1994algorithms}, solving linear system of equations \cite{harrow2009quantum}, simulating chemical molecular \cite{aspuru2005simulated} and other fields. However, in order to realize quantum algorithms in practice, medium or large scale general quantum computers are required. Currently it is still  difficult to implement such quantum computers. Therefore,  to advance the application of quantum algorithms in the NISQ era, we would consider to reduce the required qubits or other quantum resources for quantum computers.
	
	Distributed quantum computing is a computing method that combines distributed computing and quantum computing \cite{avron2021quantum,beals2013effcient,li2017application,yimsiriwattana2004distributed}. It aims to solve problems by utilizing multiple smaller quantum computers working together. Distributed quantum computing is usually used to reduce the resources required by each computer, including qubits, gate complexity, circuit depth and so on. Due to these potential benefits, distributed quantum computing has been studied significantly \cite{avron2021quantum,beals2013effcient,izumi2019quantum,le2018sublinear,li2017application,qiu2022distributed,tan2022distributed,yimsiriwattana2004distributed}.
For example, in 2013, Beals et al. proposed an algorithm for parallel addressing quantum memory \cite{beals2013effcient}.	In 2018, Le Gall et al. studied quantum algorithms in the quantum CONGEST model \cite{le2018sublinear}.  In 2022, Qiu et al. proposed a distributed Grover's algorithm \cite{qiu2022distributed}, and Tan, Xiao and Qiu et al. proposed a distributed quantum algorithm for Simon's problem \cite{tan2022distributed}. 
These distributed quantum algorithms can reduce quantum resources to some extent.
	
	Shor's algorithm \cite{shor1994algorithms}  is  one of the most significant algorithms in quantum computing. It can factor large integers with a certain probability of success and costs polynomial time. Since the best known classical algorithm for factoring large numbers is subexponential but superpolynomial, Shor's algorithm demonstrates quantum advantages. Shor's algorithm can be applied to break RSA encryption which has been widely used in public key cryptography system. Shor's algorithm can be implemented in two ways: one needs to measure multiple qubits at the end (we call it \textit{non-iterative Shor's algorithm}, e.g. \cite{kaye2006introduction,nielsen2000quantum,shor1994algorithms}), and the other alternately performs unitary operators and measurements, and only  one qubit is measured at a time (we call it \textit{iterative Shor's algorithm}, e.g. \cite{beauregard2003circuit,haner2017factoring,parker2000efficient}). The iterative Shor's algorithm has only one control qubit and it requires $2L+O(1)$ qubits when factoring an $L$-bit integer \cite{beauregard2003circuit,haner2017factoring,parker2000efficient}. The non-iterative Shor's algorithm has $2L+O(1)$ control qubits and thus it requires $4L+O(1)$ qubits \cite{kaye2006introduction,nielsen2000quantum,shor1994algorithms}.
	
	Shor's algorithm (proposed in 1994) contains the idea of phase estimation, but the phase estimation algorithm was formally proposed in 1995 \cite{kitaev1995quantum}. Similar to Shor's algorithm, we divide the phase estimation algorithm into \textit{non-iterative phase estimation algorithm} (e.g. \cite{kaye2006introduction,nielsen2000quantum}) and \textit{iterative phase estimation algorithm} (e.g. \cite{kitaev1995quantum}). The iterative phase estimation algorithm requires one control qubits and the non-iterative phase estimation algorithm requires multiple control qubits.
	
	However, the application of phase estimation algorithm or Shor's algorithm requires a large number of qubits \cite{craig2021how}. Therefore, it is very necessary to reduce the resources required in these algorithms by designing new methods, such as utilizing distributed quantum computing. 
	
	In 2017, Li and Qiu et al. \cite{li2017application} proposed a distributed phase estimation algorithm. Actually, their method can reduce the number of qubits or gate complexity required by a single node. However, their algorithm cannot guarantee that the deviation  of the final result from real result is within a given range. In 2020, Neumann et al. proposed a distributed phase estimation algorithm by implementing non-local gates \cite{neumann2020imperfect}. In fact, it is a general distributed approach.
	
	The first distributed Shor's algorithm was proposed by Yimsiriwattana et al. \cite{yimsiriwattana2004distributed} in 2004. In a way, this is also a universal distributed method. Their method first divides the quantum circuit into several parts directly, and then realizes non-local quantum gates by means of quantum communication. The maximum number of qubits required by a single node of their distributed algorithm is $L+O(1)$ when factoring an $L$-bit integer.  Its communication complexity is $O(L^2)$. 
	
	Recently, Xiao and Qiu et al. \cite{gang2023distributed} proposed a distributed Shor's algorithm. Their algorithm uses two nodes to cooperate to complete the key step in Shor's algorithm, that is, to estimate some $\dfrac{s}{r}$, where $r$ is the ``order'' and $s\in\{0,\cdots,r-1\}$. Compared with the traditional Shor's algorithm, their algorithm reduces the required qubits (nearly $L/2$ qubits are reduced) of a single node. In addition, the communication complexity of their distributed Shor's algorithm is $O(L)$.

	In this paper, we first propose a new distributed phase estimation algorithm. Our distributed phase estimation algorithm utilizes multiple nodes to estimate bits at different positions of the phase and employs classical post-processing to adjust the deviation of the final result. Our distributed phase estimation algorithm does not require quantum communication, and each node requires $\dfrac{n}{k}+\log_2k+O(1)$ control qubits ($k$ is the number of nodes) when estimating the first $n$ bits of phase. Compared with the non-iterative phase estimation algorithm, the maximum number of qubits required by a single node of our distributed algorithm is reduced by $(1-\dfrac{1}{k})n-\log_2k-O(1)$.
	
	Afterwards, we apply the above distributed phase estimation algorithm to form a distributed Shor's algorithm (more specifically, to form a order-finding algorithm). The maximum number of qubits required by a single node in our distributed order-finding algorithm is $(2+\dfrac{1}{k})L+\log_2k+O(1)$ ($k$ is the number of nodes) when factoring an $L$-bit integer, and its communication complexity is $O(kL)$, which is better than Yimsiriwattana's algorithm \cite{yimsiriwattana2004distributed}. 
	
	
	The remainder of the paper is organized as follows. First, in Section 2, we review a number of quantum algorithms related to phase estimation algorithm and Shor's algorithm.  Then in Section 3, we present our distributed phase estimation algorithm. After that, in Section 4, we present our main result---distributed order-finding algorithm, and subsequently, in Section 5, we analyze the complexity of our distributed algorithms and compare them with other related algorithms. Finally in Section 6, we summarize the main results and mention potential problems for further study.

\section{Preliminaries}\label{sec:preliminaries}

In this section, we would review  quantum Fourier transform, phase estimation algorithm, order-finding algorithm, and other relevant concepts that will be used in the paper. It is assumed that the readers have a familiarity with linear algebra and basic notation in quantum computing. In the interest of readability, we  review some basic concepts concerning quantum computing in Appendix 1, and  for further details, we can refer to \cite{nielsen2000quantum}.

\subsection{Quantum Fourier transform}

	Quantum Fourier transform is a unitary operator that acts on the standard basis states as follows:
\begin{equation}
QFT |j\rangle=\frac{1}{\sqrt{2^n}}\sum_{k=0}^{2^n-1}e^{2\pi ijk/2^n}|k\rangle\text{,}
\end{equation}
for $j=0,1,\cdots,2^n-1$. Therefore, the inverse quantum Fourier transform acts as follows:
\begin{equation}\label{inverse_QFT}
QFT^{-1} \frac{1}{\sqrt{2^n}}\sum_{k=0}^{2^n-1}e^{2\pi ijk/2^n}|k\rangle=|j\rangle\text{,}
\end{equation}
for $j=0,1,\cdots,2^n-1$.
The quantum Fourier transform and its inverse can be implemented by using $O(n^2)$ elementary gates (i.e., $O(n^2)$ single-qubit and two-qubit gates) \cite{nielsen2000quantum,shor1994algorithms}.

\subsection{Phase estimation algorithm}

	Phase estimation algorithm is a practical application of quantum Fourier transform. Consider a quantum state $|u\rangle$ and a unitary operator $U$ such that 
\begin{equation}\label{eq:Uu}
U|u\rangle=e^{2\pi i\omega}|u\rangle
\end{equation}
 for some real number $\omega\in[0,1)$. If we can implement controlled operation $C_m(U)$ satisfying that
\begin{equation}\label{CmU}
C_m(U)|j \rangle|u\rangle=|j\rangle U^j|u\rangle
\end{equation}
for any positive integer $m$ and $m$-bit string $j$, where the first register is control qubits, then we can apply the phase estimation algorithm to estimate $\omega$ (see Algorithm \ref{alg:PE}).

For the sake of convenience, we give a number of notations in the following definition. 

\begin{definition}
For any real number $\omega$, suppose its binary representation is $\omega=a_{1}a_{2}\cdots a_l.b_1b_2\cdots$. Denote $|PE_{t,\omega}\rangle, \omega_{\{i,j\}}, \omega_{[i,j]} $, $d_t(x,y)$ , $len(x)$, $ADD(x,b)$ respectively as follows:
\begin{itemize}
\item $|PE_{t,\omega}\rangle$: for any positive integer $t$, define\\
$|PE_{t,\omega}\rangle=QFT^{-1}\dfrac{1}{\sqrt{2^t}}\sum\limits_{j=0}^{2^t-1}e^{2\pi ij\omega}|j\rangle $.

\item $\omega_{\{i,j\}}$: for any integer $i,j$ with $1\leq i\leq j$, define\\
 $\omega_{\{i,j\}}=b_i b_{i+1}\cdots b_j$.
\item $\omega_{\{i,+\infty\}}$: for any positive integer $i$, define\\
 $\omega_{\{i,+\infty\}}=0.b_ib_{i+1}\cdots $.
\item $\omega_{[i,j]}$: for any integer $i,j$ with $1\leq i\leq j\leq l$, define\\
$\omega_{[i,j]}=a_{i} a_{i+1}\cdots a_j$.

\item $d_t(x,y)$: for any $x,y\in\{0,1\}^t$, define\\
 $d_t(x,y)=\min\left(|x-y|, 2^t-|x-y|\right)$.
\item $len(x)$: the length of string $x$.
\item $ADD(x,b)$: for any bit string $x$ and integer $b$, $ADD(x,b)$ is a bit string $y$ of length $len(x)$, with $y=(x+b)\mod 2^{len(x)}$.
\end{itemize}
\end{definition}

\begin{remark}
In this paper, when performing operations or comparisons on bit strings, we consider them as their corresponding binary numbers. The definitions of $d_t(x,y)$ and $ADD(x,b)$ follow this principle. 
\end{remark}

\begin{algorithm}[h]
\caption{Phase estimation algorithm}
\label{alg:PE}
\textbf{Input}: A positive integer $n$ (it means that we want to estimate the first $n$ bits of $\omega$) and the success probability $1-\epsilon$ ($\epsilon\in(0,1)$).

\textbf{Output}: A $t$-bit string $\widetilde{\omega}$ such that $d_n(\widetilde{\omega}_{[1,n]},\omega_{\{1,n\}})\leq 1$.
\textbf{Procedure}:
\begin{algorithmic}[1]
\STATE Create initial state $|0\rangle|u\rangle$:\\
The first register is $t$-qubit.
\STATE Apply $H^{\otimes t}$ to the first register:\\
  \quad$H^{\otimes t}|0\rangle|u\rangle=\dfrac{1}{\sqrt{2^t}}\sum\limits_{j=0}^{2^t-1}|j\rangle|u\rangle$.
\STATE Apply $C_t(U)$:\\
  \quad$C_t(U)\dfrac{1}{\sqrt{2^t}}\sum\limits_{j=0}^{2^t-1}|j\rangle|u\rangle=\dfrac{1}{\sqrt{2^t}}\sum\limits_{j=0}^{2^t-1}|j\rangle U^j|u\rangle=\dfrac{1}{\sqrt{2^t}}\sum\limits_{j=0}^{2^t-1}|j\rangle e^{2\pi ij\omega}|u\rangle$.
\STATE Apply $QFT^{-1}$:\\
 \quad$QFT^{-1}\dfrac{1}{\sqrt{2^t}}\sum\limits_{j=0}^{2^t-1}e^{2\pi ij\omega}|j\rangle |u\rangle=|PE_{t,\omega}\rangle|u\rangle$.
\STATE Measure the first register:\\
 \quad obtain a $t$-bit string $\widetilde{\omega}$.

\end{algorithmic}
\end{algorithm}

\begin{remark}\label{remark:PEVariant}
Let $x$ be a natural number. By Equation (\ref{eq:Uu}), we have $U^{2^{x-1}}|u\rangle=e^{2\pi i(2^{x-1}\omega)}|u\rangle=e^{2\pi i\omega_{\{x,+\infty\}}}|u\rangle$. Thus, to estimate $\omega_{\{x,+\infty\}}$, we can apply the phase estimation algorithm similarly and change $C_t(U)$ in step 3 to $C_t(U^{2^{x-1}})$ accordingly \cite{li2017application}.
\end{remark}

$d_t(\cdot,\cdot)$ has the following properties.

\begin{lemma}[See \cite{gang2023distributed}]\label{d_t}
Let $t$ be a positive integer and let $x,y$ be any two $t$-bit strings. It holds that: \\
{\rm (I)} Let $B=\{b\in\{-(2^t-1),\cdots,2^{t}-1\}: ADD(x,b)=y\}$. Then $d_t(x,y)=\min_{b\in B}|b|$.\\
{\rm (II)} $d_t(\cdot,\cdot)$ is a distance on $\{0,1\}^t$.\\
{\rm (III)} Let $t_0<t$ be an positive integer. If $d_t(x,y)<2^{t-t_0}$, then
\begin{equation}\label{d_t 1}
d_{t_0}(x_{[1,t_0]},y_{[1,t_0]})\leq 1.
\end{equation}
\end{lemma}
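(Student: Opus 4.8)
The plan is to establish the three parts in order, using part (I) as the computational backbone for parts (II) and (III). For part (I), I would fix representatives $x,y\in\{0,1,\dots,2^t-1\}$ and split on the sign of $b_0:=y-x$. Every element of $B$ is congruent to $b_0$ modulo $2^t$, so the only candidates inside the stated range $\{-(2^t-1),\dots,2^t-1\}$ are $b_0$ and $b_0\mp 2^t$, with the sign chosen to keep the second value in range. A short case check shows $\{|b_0|,\,2^t-|b_0|\}=\{|x-y|,\,2^t-|x-y|\}$, whose minimum is the definition of $d_t(x,y)$; the degenerate case $x=y$ gives $B=\{0\}$ directly. This is routine bookkeeping.

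For part (II), nonnegativity and symmetry are immediate from the defining formula, and the identity of indiscernibles follows since $d_t(x,y)=0$ forces $|x-y|\in\{0,2^t\}$, where $|x-y|\le 2^t-1<2^t$ rules out the second option. The only substantive axiom is the triangle inequality, and here I would invoke (I). If $b_1$ realizes $d_t(x,y)$ and $b_2$ realizes $d_t(y,z)$, then $(x+b_1+b_2)\equiv z \pmod{2^t}$, so $b_1+b_2$ represents the residue $z-x$. Since $d_t\le 2^{t-1}$ always holds, I would record the auxiliary fact that for any integer $c\equiv z-x\pmod{2^t}$ one has $d_t(x,z)\le|c|$: either $|c|<2^t$, so $c$ lies in the admissible set and (I) applies, or $|c|\ge 2^t>2^{t-1}\ge d_t(x,z)$. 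Taking $c=b_1+b_2$ with $|b_1+b_2|\le d_t(x,y)+d_t(y,z)$ then closes the argument.

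Part (III) is the crux. Writing $X=x_{[1,t_0]}=\lfloor x/2^{t-t_0}\rfloor$ and $Y=y_{[1,t_0]}=\lfloor y/2^{t-t_0}\rfloor$, and decomposing $x=X\,2^{t-t_0}+x'$, $y=Y\,2^{t-t_0}+y'$ with $0\le x',y'<2^{t-t_0}$, I would use (I) to pick $b$ with $|b|=d_t(x,y)<2^{t-t_0}$ and $y\equiv x+b\pmod{2^t}$. Substituting the decompositions gives $(Y-X)2^{t-t_0}\equiv c\pmod{2^t}$, where $c=(x'-y')+b$. The two key observations are: (a) $|c|\le|x'-y'|+|b|<2^{t-t_0+1}$, and (b) the congruence forces $c$ to be an integer multiple of $2^{t-t_0}$, say $c=e\,2^{t-t_0}$. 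Together these pin $e\in\{-1,0,1\}$, so the congruence collapses to $Y\equiv X+e\pmod{2^{t_0}}$. Applying part (I) at scale $t_0$ (valid since $t_0\ge 1$ puts $e$ in range) yields $d_{t_0}(X,Y)\le|e|\le 1$, as required.

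The main obstacle I anticipate is the modular bookkeeping in part (III): arguing cleanly that the low-order discrepancy $c$ is simultaneously bounded by $2^{t-t_0+1}$ and divisible by $2^{t-t_0}$, which is precisely what forces the three-valued quantity $e$ down to the desired bound of $1$. Everything else reduces to careful but elementary case analysis once part (I) is in hand.
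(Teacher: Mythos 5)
The paper itself does not prove Lemma~\ref{d_t}; it imports the statement from the cited reference \cite{gang2023distributed}, so there is no in-paper argument to compare yours against. Judged on its own, your proof is correct and complete. Part (I) is right: the window $\{-(2^t-1),\dots,2^t-1\}$ has length less than $2\cdot 2^t$, so it contains at most two representatives of the residue class of $y-x$, namely $b_0=y-x$ and $b_0\mp 2^t$ (only $b_0=0$ when $x=y$), and their absolute values are exactly $|x-y|$ and $2^t-|x-y|$. Your auxiliary fact for the triangle inequality --- that $d_t(x,z)\le |c|$ for \emph{any} integer $c\equiv z-x\pmod{2^t}$, split on whether $|c|<2^t$ or $|c|\ge 2^t>2^{t-1}\ge d_t(x,z)$ --- is the clean way to handle the case where $b_1+b_2$ falls outside the admissible window. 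Part (III) is also sound: with $|x'-y'|\le 2^{t-t_0}-1$ and $|b|\le 2^{t-t_0}-1$ you get $|c|<2^{t-t_0+1}$, the congruence $(Y-X)2^{t-t_0}\equiv c\pmod{2^t}$ does force $2^{t-t_0}\mid c$, and $e\in\{-1,0,1\}$ lies in the admissible window at scale $t_0$ for every $t_0\ge 1$, so invoking (I) there is legitimate. I see no gap.
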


	The goal of  phase estimation algorithm is to estimate $\omega$, which can be more accurately described by the following propositions.

\begin{proposition}[See \cite{nielsen2000quantum}]\label{phase_estimation_result}
	In {\rm Algorithm \ref{alg:PE}}, for any $\epsilon>0$ and any positive integer $n$, if $t=n+\lceil\log_2(2+\dfrac{1}{2\epsilon})\rceil$, then the probability of $d_t(\widetilde{\omega},\omega_{\{1,t\}})<2^{t-n}$ is at least $1-\epsilon$. 
\end{proposition}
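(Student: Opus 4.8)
The plan is to carry out the standard accuracy analysis of phase estimation, adapted to the circular distance $d_t$. First I would write the pre-measurement state $|\psi_{t,\omega}\rangle$ explicitly by applying $QFT^{-1}$ to the state $\frac{1}{\sqrt{2^t}}\sum_{j=0}^{2^t-1}e^{2\pi ij\omega}|j\rangle$ produced in Step 3 of Algorithm \ref{Phase_estimation_algorithm}. Using $QFT^{-1}|j\rangle=\frac{1}{\sqrt{2^t}}\sum_{k}e^{-2\pi ijk/2^t}|k\rangle$, the amplitude of each computational basis state $|m\rangle$ becomes the finite geometric sum $\alpha_m=\frac{1}{2^t}\sum_{j=0}^{2^t-1}e^{2\pi ij(\omega-m/2^t)}$, which evaluates in closed form to $\frac{1}{2^t}\cdot\frac{e^{2\pi i 2^t(\omega-m/2^t)}-1}{e^{2\pi i(\omega-m/2^t)}-1}$ whenever the denominator is nonzero, and to $1$ otherwise.

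Next I would set $b=\omega_{\{1,t\}}$ (viewed as an integer) and $\delta=\omega-b/2^t$, so that $0\le\delta<2^{-t}$, and reparametrize the outcomes by $\ell$ through $m=(b+\ell)\bmod 2^t$, with $\ell$ ranging over a symmetric window of $2^t$ consecutive integers, say $-2^{t-1}<\ell\le 2^{t-1}$. By Lemma \ref{d_t}(I), within this window $d_t(m,b)=|\ell|$, so (using that $n\ge 1$) the failure event $d_t(\widetilde{\omega},\omega_{\{1,t\}})\ge 2^{t-n}$ is precisely the event that the measured $\ell$ satisfies $|\ell|\ge 2^{t-n}$. Since $e^{-2\pi i\ell}=1$ for integer $\ell$, the amplitude simplifies to $\alpha_m=\frac{1}{2^t}\cdot\frac{e^{2\pi i 2^t\delta}-1}{e^{2\pi i(\delta-\ell/2^t)}-1}$, so the probability of each outcome depends on $\ell$ only through the quantity $\ell-2^t\delta$.

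Then I would apply the two elementary bounds $|e^{i\theta}-1|\le 2$ and $|e^{i\theta}-1|\ge\frac{2|\theta|}{\pi}$ valid for $|\theta|\le\pi$, the latter being applicable here because $|2\pi(\delta-\ell/2^t)|\le\pi$ holds throughout the chosen window. This gives the bound $|\alpha_m|^2\le\frac{1}{4(\ell-2^t\delta)^2}$. Summing over the failure set $\{|\ell|\ge 2^{t-n}\}$ and using $0\le 2^t\delta<1$ to compare the two-sided sum with an integral, I would bound the total failure probability by $\frac{1}{2(2^{t-n}-2)}$. Finally, writing $e=2^{t-n}-1$ and requiring $\frac{1}{2(e-1)}\le\epsilon$, this rearranges to $2^{t-n}\ge 2+\frac{1}{2\epsilon}$, i.e. $t\ge n+\log_2(2+\frac{1}{2\epsilon})$; the stated choice $t=n+\lceil\log_2(2+\frac{1}{2\epsilon})\rceil$ meets this, so the success probability is at least $1-\epsilon$.

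I expect the main obstacle to be the tail estimate rather than the geometric-series evaluation: one must keep the modular reparametrization consistent with $d_t$ so that each failing outcome is counted exactly once, verify the hypothesis $|\theta|\le\pi$ uniformly across the whole window, and bound $\sum_{|\ell|\ge 2^{t-n}}\frac{1}{(\ell-2^t\delta)^2}$ by a clean integral comparison that is uniform in $\delta$. Getting the constant sharp enough to land exactly on the $2+\frac{1}{2\epsilon}$ threshold (as opposed to a weaker constant) is the delicate point.
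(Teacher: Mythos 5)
Your proposal is correct and is precisely the standard Nielsen--Chuang accuracy analysis that the paper invokes by citation (Proposition \ref{phase_estimation_result} is stated with ``See \cite{nielsen2000quantum}'' and given no independent proof in the paper): the geometric-sum amplitude $\alpha_m$, the reparametrization $m=(b+\ell)\bmod 2^t$ with $0\le\delta<2^{-t}$ matching $d_t$ on the window $-2^{t-1}<\ell\le 2^{t-1}$, the bounds $|e^{i\theta}-1|\le 2$ and $|e^{i\theta}-1|\ge 2|\theta|/\pi$, and the tail estimate $\frac{1}{2(e-1)}$ with $e=2^{t-n}-1$ leading exactly to the threshold $2^{t-n}\ge 2+\frac{1}{2\epsilon}$. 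All the delicate points you flag (single counting under the modular shift, $|\theta|\le\pi$ on the whole window, uniformity in $\delta$, and the degenerate case $\delta=0$) check out, so no gap remains.
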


\begin{proposition}\label{phase_estimation2}
	In {\rm Algorithm \ref{alg:PE}}, for any $\epsilon>0$ and any positive integer $n$, if $t=n+\lceil\log_2(2+\dfrac{1}{2\epsilon})\rceil$, then the probability of $d_n(\widetilde{\omega}_{[1,n]},\omega_{\{1,n\}})\leq 1$ is at least $1-\epsilon$. 
\end{proposition}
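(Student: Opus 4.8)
The plan is to obtain Proposition~\ref{phase_estimation2} from Proposition~\ref{phase_estimation_result} by pushing the high-probability bound on the full $t$-bit output $\widetilde{\omega}$ down to its leading $n$ bits, with Lemma~\ref{d_t}(III) doing the work. The key observation is that both propositions use the \emph{same} value $t=n+\lceil\log_2(2+\frac{1}{2\epsilon})\rceil$, so the event whose probability is at least $1-\epsilon$ is literally the same; all that remains is a deterministic implication between the two distance statements, which holds pointwise on that event.

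First I would invoke Proposition~\ref{phase_estimation_result} to get that $d_t(\widetilde{\omega},\omega_{\{1,t\}})<2^{t-n}$ with probability at least $1-\epsilon$. Since $\epsilon>0$ gives $2+\frac{1}{2\epsilon}>2$, the quantity $\lceil\log_2(2+\frac{1}{2\epsilon})\rceil$ is a positive integer, hence $n<t$; this is exactly the hypothesis $t_0<t$ required to apply Lemma~\ref{d_t}(III) with $t_0=n$. Conditioning on the above event and applying that lemma with $x=\widetilde{\omega}$ and $y=\omega_{\{1,t\}}$ then yields $d_n(\widetilde{\omega}_{[1,n]},(\omega_{\{1,t\}})_{[1,n]})\leq 1$. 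The last ingredient is a purely notational identification: viewing $\omega_{\{1,t\}}=b_1b_2\cdots b_t$ as a $t$-bit integer, its top $n$ bits are $(\omega_{\{1,t\}})_{[1,n]}=b_1\cdots b_n=\omega_{\{1,n\}}$. Substituting this equality gives the bound $d_n(\widetilde{\omega}_{[1,n]},\omega_{\{1,n\}})\leq 1$ on an event of probability at least $1-\epsilon$, which is the assertion.

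Since Lemma~\ref{d_t}(III) carries the analytic content, I expect no serious obstacle; the one place to be careful is the index/metric bookkeeping. Concretely, I would verify that deleting the lowest $t-n$ bits of a $t$-bit integer is exactly the operation $(\cdot)_{[1,n]}$, so that $\widetilde{\omega}_{[1,n]}$ and $(\omega_{\{1,t\}})_{[1,n]}$ are read off consistently, and I would confirm which metric governs the truncated $n$-bit strings. Lemma~\ref{d_t}(III) delivers the bound in $d_n$, and because $d_n$ and $d_t$ can differ on $n$-bit strings in the ``wraparound'' case, I would check that the distance appearing in the conclusion is indeed the one supplied by the lemma with $t_0=n$.
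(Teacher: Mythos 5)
Your proposal is correct and follows exactly the route the paper intends: its proof is the one-liner ``Immediate from Proposition \ref{phase_estimation_result} and Lemma \ref{d_t}'', and you have simply filled in the details (same $t$, apply Lemma \ref{d_t}(III) with $t_0=n$, identify $(\omega_{\{1,t\}})_{[1,n]}$ with $\omega_{\{1,n\}}$). Your closing caveat about the metric is well taken: the lemma yields the bound in $d_n$, so the subscript $t$ in the proposition's statement is best read as a typo for $n$.
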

\begin{proof}
Immediate from Proposition \ref{phase_estimation_result} and Lemma \ref{d_t}.
\end{proof}
\begin{remark}
According to these two propositions, we know that the phase estimation algorithm can not get an estimate $\widetilde{\omega}$ that is arbitrarily close to $\omega$. It only satisfies $d_n(\widetilde{\omega}_{[1,n]},\omega_{\{1,n\}})\leq 1$, not $|\widetilde{\omega}_{[1,n]}-\omega_{\{1,n\}}|\leq 1$. However, if $1\leq \omega_{\{1,n\}}< 2^n-1$, we can conclude $|\widetilde{\omega}_{[1,n]}-\omega_{\{1,n\}}|\leq 1$ from $d_n(\widetilde{\omega}_{[1,n]},\omega_{\{1,n\}})\leq 1$.
\end{remark}

	By means of  using mathematical language to describe Proposition \ref{phase_estimation2}, the following corollary can be obtained.

\begin{corollary}\label{MeasurePE}
Let $n$ be a positive integer and let $\omega\in[0,1),\epsilon\in(0,1),t=n+\lceil\log_2(2+\dfrac{1}{2\epsilon})\rceil$. Denote $D=\{x\in\{0,1\}^n: d_n(x,\omega_{\{1,n\}})\leq 1\}$ and $P_D=\sum_{a\in D}|a\rangle\langle a|$, then
\begin{equation}
\left\|P_D|PE_{t,\omega}\rangle \right\|^2\geq 1-\epsilon.
\end{equation}
\end{corollary}


\subsection{Order-finding algorithm}

 Order-finding algorithm is the key subroutine in Shor's algorithm. Given an $L$-bit integer $N$ and a positive integer $a$ with $gcd(a,N)=1$, the goal of order-finding algorithm is to determine the order $r$ of $a$ modulo $N$, where the order $r$ is defined as the smallest integer $r$ such that $a^r\equiv 1(\bmod\ N)$. An important unitary operator  $M_a$ in order-finding algorithm is defined as
\begin{equation}
M_a|x \rangle=|ax\ \bmod\ N\rangle \text{.}
\end{equation}
Define
\begin{equation}
|u_s\rangle=\dfrac{1}{\sqrt{r}}\sum\limits_{k=0}^{r-1}e^{-2\pi i\frac{s}{r}k}|a^k \bmod\ N\rangle,
\end{equation}
$s=0,1,\cdots,r-1$. It satisfies
\begin{align}
M_a|u_s\rangle=e^{2\pi i\frac{s}{r}} |u_s\rangle,\\
\dfrac{1}{\sqrt{r}}\sum\limits_{s=0}^{r-1}|u_s\rangle=|1\rangle,
\end{align} and
\begin{equation}\label{us_orthonormal}
\langle u_s|u_{s'}\rangle=\delta_{s,s'}=
\begin{cases} 0 &\text{if $s\not= s'$},\\
1 &\text{if $s=s'$}.
\end{cases}
\end{equation}

Algorithm \ref{alg:OFA}  \cite{nielsen2000quantum} and Figure \ref{fig:order finding algorithm} show the procedure of order-finding algorithm.

\begin{algorithm}[h!]
\caption{Order-finding algorithm}
\label{alg:OFA}
\textbf{Input}: Positive integers $N$ and $a$ with $gcd(N,a)=1$.\\
\textbf{Output}: The order $r$ of $a$ modulo $N$.\\
\textbf{Procedure}:
\begin{algorithmic}[1]
\STATE Create initial state $|0\rangle|1\rangle$:\\
\quad The first register has $t=2L+1+\lceil\log_2(2+\dfrac{1}{2\epsilon})\rceil$ qubits and the second register has $L$ qubits.
\STATE Apply $H^{\otimes t}$ to the first register:\\
 \quad $H^{\otimes t}|0\rangle|1\rangle=\dfrac{1}{\sqrt{2^t}}\sum\limits_{j=0}^{2^t-1}|j\rangle|1\rangle$.
\STATE Apply $C_t(M_a)$:\\
  \quad$C_t(M_a)\dfrac{1}{\sqrt{2^t}}\sum\limits_{j=0}^{2^t-1}|j\rangle|1\rangle=$\\$\dfrac{1}{\sqrt{2^t}}\sum\limits_{j=0}^{2^t-1}|j\rangle M^j (\dfrac{1}{\sqrt{r}}\sum\limits_{s=0}^{r-1}|u_s\rangle)=$\\$\dfrac{1}{\sqrt{r2^t}}\sum\limits_{s=0}^{r-1}\sum\limits_{j=0}^{2^t-1}|j\rangle e^{2\pi ij\frac{s}{r}}|u_s\rangle$.
\STATE Apply $QFT^{-1}$:\\
 \quad $QFT^{-1}\dfrac{1}{\sqrt{r2^t}}\sum\limits_{s=0}^{r-1}\sum\limits_{j=0}^{2^t-1}|j\rangle e^{2\pi ij\frac{s}{r}}|u_s\rangle=$\\$\dfrac{1}{\sqrt{r}}\sum\limits_{s=0}^{r-1}|PE_{t,s/r}\rangle|u_s\rangle$
\STATE Measure the first register:\\
\quad obtain a $t$-bit string $m$ (it can be an estimation of $\dfrac{s}{r}$ for some $s$).
\STATE Apply continued fractions algorithm:\\
\quad obtain $r$.
\end{algorithmic}
\end{algorithm}

\begin{figure}[h!]
	\centering
	\includegraphics[width=0.4\textwidth]{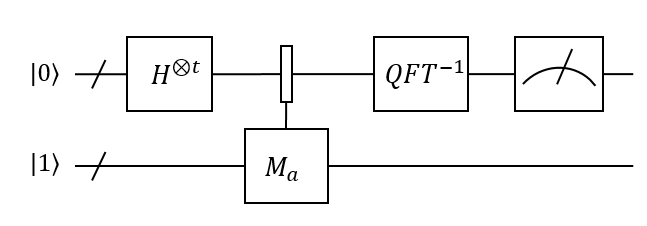}
	\caption{\label{fig:order finding algorithm} Circuit for order-finding algorithm}
\end{figure}

	The function of the quantum part of Algorithm \ref{alg:OFA} (steps 1 to 5) can be described by the following proposition.
	
\begin{proposition}[See \cite{nielsen2000quantum}]\label{OrderFindingResult}
In Algorithm \ref{alg:OFA}, for any fixed $s\in \{0,1,\cdots,r-1\}$, the probability of
$$\left|\dfrac{m}{2^{len(m)}}-\dfrac{s}{r}\right|<2^{-(2L+1)}$$  is at least $\dfrac{1-\epsilon}{r}$. 
\end{proposition}

	By Proposition \ref{OrderFindingResult}, we can see that the goal of the quantum part of  Algorithm \ref{alg:OFA} (steps 1 to 5) is to obtain an estimation of a random $\dfrac{s}{r}$ where $s\in\{0,1,\cdots,r-1\}$ (i.e. $\left|\dfrac{m}{2^t}-\dfrac{s}{r}\right|\leq 2^{-(2L+1)}$).  It is worth mentioning that the deviation range $2^{-(2L+1)}$ is a prerequisite to ensure the correctness of step 6 in Algorithm \ref{alg:OFA}.

\section{Distributed phase estimation algorithm}

	In 2017, Li and Qiu et al. \cite{li2017application} proposed a distributed phase estimation algorithm, which employs the technique mentioned in Remark \ref{remark:PEVariant}. However,  the deviation  in their algorithm may not be  within a given range. In this section, we propose a new distributed phase estimation algorithm. By combining some classical post-processing strategies, we ensure the correctness of our distributed algorithm. Suppose $U,|u\rangle,\omega$ satisfy Equation (\ref{eq:Uu}) and we estimate the first $n$ bits of $\omega$.  The idea of our algorithm is as follows:

	Let integers $l_1,l_2,\cdots,l_{k+1}$ satisfy 
\begin{equation}\label{eq:li}
0=l_1<l_2\cdots <l_{k+1}=n-2.
\end{equation} 
We use $k$ computing nodes (denoted as $A_1,\cdots,A_k$) to estimate the bits of different parts of $\omega$ respectively, where node $A_i$ estimates $\omega_{\{l_i,l_{i+1}+2\}}$, $i=1,2,\cdots,k+1$ (shown in Fig. \ref{fig:DPEEstimation}). We can do this by employing the technique mentioned in Remark \ref{remark:PEVariant}. It can be seen that the corresponding positions of the first three bits of $A_{i+1}$'s estimation and the last three bits of $A_{i+1}$'s estimation are overlapping with each other. So we can use the bits with overlapped positions to correct the estimation results and finally combine all estimates.The process of using phase estimation in a distributed manner is demonstrated in Algorithm \ref{alg:DPE}. The steps of correction and combination are shown in Algorithm \ref{alg:CorrectAndCombine}.

\begin{figure}[h!]
	\centering
	\includegraphics[width=0.45\textwidth]{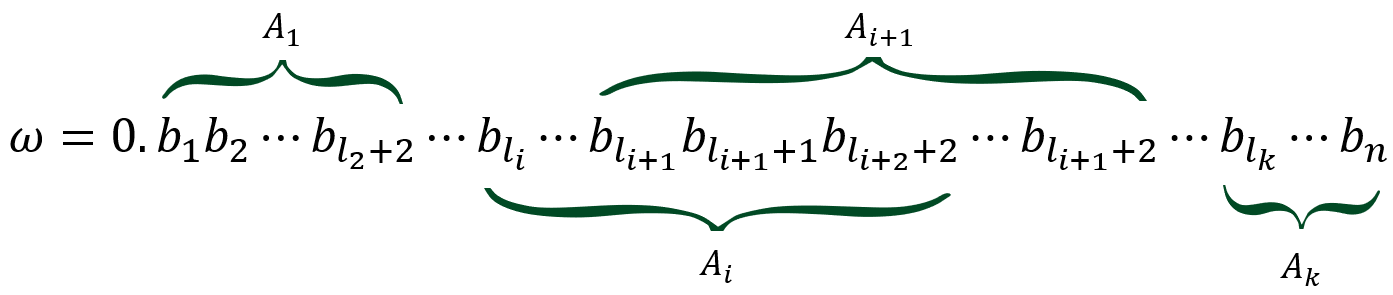}
	\caption{\label{fig:DPEEstimation} The positions of bits estimated by each node}
\end{figure}

\begin{algorithm}[h!]
\caption{Distributed phase estimation algorithm ($k$ nodes)}
\label{alg:DPE}

\textbf{Input}: A positive integer $n$ (it means that we want to estimate the first $n$ bits of $\omega$) and the success probability $1-\epsilon$ ($\epsilon\in(0,1)$).\\
\textbf{Output}: Output an $n$-bit string $m$ such that $d_t(m,\omega_{\{1,n\}})\leq 1$ with success probability at least $1-\epsilon$.\\
\textbf{Procedure}:\\
Node $A_1,A_2,\cdots,A_k$ perform the following operations in parallel.

\quad\textbf{ Node $A_r$ excute ($r=1,2,\cdots,k$)}:
\begin{algorithmic}[1]

\STATE Create initial state $|0\rangle_{R_r}|u\rangle$:\\
Register $R_r$ is $t_r$-qubit, where $t_r=l_{r+1}+3-l_r+\lceil\log_2(2+\dfrac{k}{2\epsilon})\rceil$.
\STATE Apply $H^{\otimes t_r}$ to the first register:\\
  \quad$H^{\otimes t_r}|0\rangle_{R_r}|u\rangle=\dfrac{1}{\sqrt{2^{t_r}}}\sum\limits_{j=0}^{2^{t_r}-1}|j\rangle_{R_r}|u\rangle$.
\STATE Apply $C_{t_r}(U^{2^{l_r-1}})$:\\
\quad$C_{t_r}(U^{2^{l_r-1}})\dfrac{1}{\sqrt{2^{t_r}}}\sum\limits_{j=0}^{2^{t_r}-1}|j\rangle_{R_r}|u\rangle=$\\
$\dfrac{1}{\sqrt{2^{t_r}}}\sum\limits_{j=0}^{2^{t_r}-1}|j\rangle_{R_r} (U^{2^{l_r-1}})^j|u\rangle=$\\
$\dfrac{1}{\sqrt{2^{t_r}}}\sum\limits_{j=0}^{2^{t_r}-1}|j\rangle_{R_r} e^{2\pi ij(2^{l_r-1}\omega)}|u\rangle=$\\
$\dfrac{1}{\sqrt{2^{t_r}}}\sum\limits_{j=0}^{2^{t_r}-1}|j\rangle_{R_r} e^{2\pi ij\omega_{\{l_r,+\infty\}}}|u\rangle$.
\STATE Apply $QFT^{-1}$:\\
 $QFT^{-1}\dfrac{1}{\sqrt{2^{t_r}}}\sum\limits_{j=0}^{2^{t_r}-1}e^{2\pi ij\omega_{\{l_r,+\infty\}}} |j\rangle_{R_r} |u\rangle=$\\$|PE_{t_r,\omega_{\{l_r,+\infty\}}}\rangle_{R_r}|u\rangle$.
 
 \STATE Measure the first $l_{r+1}+3-l_r$ bits of its register $R_r$:\\
 denote the measuring result of $A_r$ as $m_r$.

\textbf{Any node executes}:
\quad 
\STATE $m\leftarrow CorrectAndCombine(m_1,\cdots,m_k)$:\\
\quad $m$ is an $n$-bit string.
\STATE Return $m$.
\end{algorithmic}
\end{algorithm}

\begin{algorithm}[h!]
\caption{CorrectAndCombine}
\label{alg:CorrectAndCombine}
\textbf{Input}: $k$ bit strings $m_1,\cdots,m_k$, where $len(m_i)\geq 3$, $i=1,2,\cdots,k$ ($k$ can be any positive integer).\\
\textbf{Procedure}:
\begin{algorithmic}[1]
\STATE Set $m_k'=m_k$.
\FOR{$r=k-1$ to $1$}
	\STATE Choose $CorrectionNum_r\in\{\pm 2,\pm 1,0\}$ such that\\
	$ADD\left((m_r)_{[l_{r+1},l_{r+1}+2]},CorrectionNum_r\right)=$$(m_{r+1}')_{[1,3]}$
	\STATE $prefix_r\leftarrow ADD(m_r,CorrectionNum_r)$
	\STATE $m_r'\leftarrow prefix_r\circ (m_{r+1}')_{[4,len(m_{r+1}')]}$ (``$\circ$" represents catenation)
\ENDFOR

\STATE Return $m_1'$.
\end{algorithmic}
\end{algorithm}

The function of our distributed phase estimation algorithm is the same as that of the traditional phase estimation algorithm. It can be described by the following theorem.

\begin{theorem}\label{CorrectnessForDPE}
In Algorithm \ref{alg:DPE}, the probability of 
\begin{equation}
d_n(m,\omega_{\{1,n\}})\leq 1
\end{equation}
is at least $1-\epsilon$. 
\end{theorem}
\begin{proof}
See Appendix.
\end{proof}

\section{Distributed order-finding algorithm} \label{sec:distributed order finding algorithm}
	Recently, Xiao and Qiu et al. proposed a distributed Shor's algorithm that requires two compute nodes \cite{gang2023distributed}. Compared with the traditional Shor's algorithm, their distributed Shor's algorithm can reduce nearly $\dfrac{L}{2}$ qubits and reduce circuit depth to some extent for each node when factoring an $L$-bit composite number. In addition, their communication complexity is $O(L)$, which is better than that of the distributed Shor's algorithm in \cite{yimsiriwattana2004distributed} (its communication complexity is $O(L^2)$ ).
	
	In this section, by applying distributed phase estimation algorithm (Algorithm \ref{alg:DPE}), we propose a new multi-node distributed Shor's algorithm. Compared to the distributed Shor's algorithm proposed by Xiao and Qiu et al. (denoted as Xiao's algorithm, for simplicity), our distributed Shor's algorithm has the following advantages:
\begin{itemize}
\item Xiao's algorithm only utilizes two nodes, but our algorithm utilizes multiple nodes.
\item When factoring an $L$-bit integer, in Xiao's algorithm, according to their proof, it can be inferred that the last node must estimate more than $L+2$ bits. However, in our algorithm, each node only needs to ensure to estimate more than $2$ bits. This is because the idea of distributed phase estimation hidden in Xiao's algorithm is limited, while our distributed phase estimation algorithm is universal.
\end{itemize}

	Next, we introduce our distributed order-finding algorithm. Firstly, it should be noted that we cannot directly apply Algorithm \ref{alg:DPE} to the order-finding algorithm, since if we do this, the estimated bits for each node do not correspond to a same $\dfrac{s}{r}$, where $r$ is the ``order'' and $s\in\{0,1,\cdots,r-1\}$. To solve this problem, we need to employ quantum communication. Afterwards, Algorithm \ref{alg:DPE} can be applied. When factoring an $L$-bit integer, in order to ensure that the final estimation result $m$ satisfy $\left|\dfrac{m}{2^{len(m)}}-\dfrac{s}{r}\right|<2^{-(2L+1)}$, we need to estimate $2L+2$ bits. This is because if a bit string $x$ of length $2L+2$ satisfy $\left|x-\left(\dfrac{s}{r}\right)_{\{1,2L+2\}}\right|\leq 1$, then
\begin{equation}
\left|\dfrac{x}{2^{len(x)}}-\dfrac{s}{r}\right|\leq 2^{-(2L+1)}
\end{equation}
holds. Let integers $l_1,\cdots,l_{k+1}$ satisfy $1=l_1<\cdots<l_{k+1}=2L$. Our distributed order-finding algorithm is shown in Algorithm \ref{alg:DOFA}.  Figure \ref{fig:DOFA} shows the quantum circuit of Algorithm \ref{alg:DOFA}.

\begin{algorithm}[h!]
\caption{Distributed order-finding algorithm ($k$ nodes)}
\label{alg:DOFA}
\textbf{Input}: Positive integers $N$ and $a$ with $gcd(N,a)=1$ and $a<N$.\\
\textbf{Output}: The order $r$ of $a$ modulo $N$.\\
\textbf{Procedure}:
\begin{algorithmic}[1]
\STATE Node $A_1$ creates initial state $|0\rangle_{R_1}|1\rangle_C$. Node $A_2,\cdots,A_k$ create initial states $|0\rangle_{R_2},\cdots,|0\rangle_{R_k}$ respectively: \\
 \quad Here register $C$ is $L$-qubit and register $R_j$ is $t_j$-qubit, where $t_j=l_{j+1}+3-l_j+\lceil\log_2(2+\dfrac{k}{2\epsilon})\rceil$, $j=1,\cdots,k$.
 
\STATE Set $u=1$.

\textbf{Node  $A_u$ executes}:
\STATE Apply $H^{\otimes t_u}$ to register $R_u$:
\begin{align}
&\hspace{-2em}\dfrac{1}{\sqrt{r}}\sum\limits_{s=0}^{r-1}\left(\mathop{\otimes}\limits_{j=1}^{u-1}|PE_{t_j,\left(\frac{s}{r}\right)_{\{l_j,+\infty\}}}\rangle_{R_j}\right)  H^{\otimes t_u}|0\rangle_{R_u}|u_s\rangle_C\nonumber\\
 &\left(\mathop{\otimes}\limits_{j=u+1}^k|0\rangle_{R_j}\right)\nonumber
\end{align}

\STATE Apply $C_{t_u}(M_a^{2^{l_u-1}})$ to registers $R_u$ and $C$:
\begin{align}
 &\hspace{-2.7em}\dfrac{1}{\sqrt{r}}\sum\limits_{s=0}^{r-1}\left(\mathop{\otimes}\limits_{j=1}^{u-1}|PE_{t_j,\left(\frac{s}{r}\right)_{\{l_j,+\infty\}}}\rangle_{R_{j}}\right) \nonumber\\
  &\hspace{-2em}\left(\dfrac{1}{\sqrt{2^{t_u}}}\sum\limits_{j=0}^{2^{t_u}-1}e^{2\pi ij\left(\frac{s}{r}\right)_{\{l_u,+\infty\}}}|j\rangle_{R_u} |u_s\rangle_C\right) \left(\mathop{\otimes}\limits_{j=u+1}^k|0\rangle_{R_j}\right)\nonumber
\end{align}

\STATE Apply $QFT^{-1}$to register $R_u$:
$$\dfrac{1}{\sqrt{r}}\sum\limits_{s=0}^{r-1}\left(\mathop{\otimes}\limits_{j=1}^{u}|PE_{t_j,\left(\frac{s}{r}\right)_{\{l_j,+\infty\}}}\rangle_{R_{j}}\right) |u_s\rangle_C \left(\mathop{\otimes}\limits_{j=u+1}^k|0\rangle_{R_j}\right)$$

\STATE If $u<k$, then teleport the qubits of register $C$ to node $A_{u+1}$:
$$\hspace{-1.6em}\dfrac{1}{\sqrt{r}}\sum\limits_{s=0}^{r-1}\left(\mathop{\otimes}\limits_{j=1}^{u}|PE_{t_j,\left(\frac{s}{r}\right)_{\{l_j,+\infty\}}}\rangle_{R_{j}}\right) |0\rangle_{R_{u+1}}|u_s\rangle_C \left(\mathop{\otimes}\limits_{j=u+2}^k|0\rangle_{R_j}\right)$$

\STATE If $u<k$, then set $u\leftarrow u+1$ and go to step 3.

\textbf{Finally}:
\STATE Node $A_j$ measures the first $l_{j+1}+3-l_j$ bits of its register $R_j$, $j=1,\cdots,k$:\\
\quad denote the measurement result of $A_j$ as $m_j$ , $j=1,\cdots,k$.
\STATE $m\leftarrow CorrectAndCombine(m_1,\cdots,m_k)$:\\
\quad $m$ is a $(2L+2)$-bit string.
\STATE Apply continued fractions algorithm:\\
\quad obtain $r$.
\end{algorithmic}
\end{algorithm}

\begin{figure}[h!]
	\centering
	\includegraphics[width=0.48\textwidth]{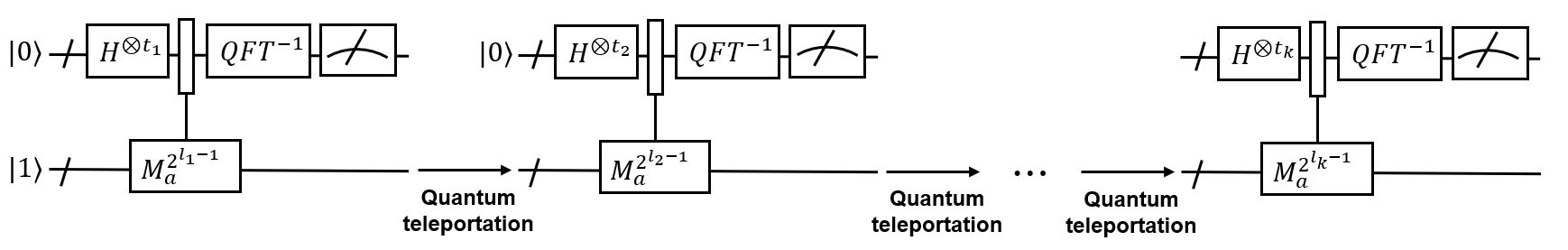}
	\caption{\label{fig:DOFA} Circuit for distributed order-finding algorithm}
\end{figure}

The function of this distributed phase estimation algorithm is the same as that of the order-finding algorithm (Algorithm \ref{alg:OFA} ). It can be described by the following theorem (it is almost the same as Proposition \ref{OrderFindingResult}), which indicates that Algorithm \ref{alg:DOFA} is correct.

\begin{theorem}\label{CorrectnessForDOFA}
In Algorithm \ref{alg:DOFA}, for any fixed $s\in \{0,1,\cdots,r-1\}$, the probability of
$$\left|\dfrac{m}{2^{len(m)}}-\dfrac{s}{r}\right|<2^{-(2L+1)}$$  is at least $\dfrac{1-\epsilon}{r}$. 
\end{theorem}
\begin{proof}
The proof is very similar to the proof of the correctness of applying the phase estimation algorithm to the order-finding algorithm, so we omit it.
\end{proof}

\section{Complexity analysis} \label{sec:complexity analysis}

In this section, we analyze the complexity of our distributed phase estimation algorithm and distributed Shor's algorithm, and compare them with other related algorithms.

\subsection{Complexity of distributed phase estimation algorithm}

	In phase estimation algorithm, since the specific structure of $U$ is unknown, we can not directly give its circuit depth or gate complexity. However, the main operator of phase estimation algorithm is $C_t(U)$  (or similar operator). According to the Figure 1 in \cite{gang2023distributed}, we know that the operator $C_t(U)$ can be implemented by $t$ controlled operators in the form of controlled-$U^x$, $x=1,2,3,4,\cdots$. Therefore, we take the number of  controlled-$U^x$ gates as a metric in the complexity analysis, since it has certain correlations with circuit depth and gate complexity.
	
	In Algorithm \ref{alg:DPE}, we choose appropriate values for $l_2,\cdots,l_k$. We make $\left|l_i-(i-1)\cdot\dfrac{n}{k}\right|<1$, $i=2,\cdots,k$. The complexity of Algorithm \ref{alg:DPE} and other related algorithms are shown in Table \ref{table:DPE}. In the general distributed method (\cite{neumann2020imperfect}) of the Table \ref{table:DPE}, the circuit of each node is not unique. Hence the number of controlled-$U^x$ gates per node is uncertain. Therefore, the corresponding position in the table is represented by ``undefined".

\begin{table}[width=1\linewidth,cols=4,pos=h]
\caption{Complexity of Algorithm \ref{alg:DPE} and other related algorithms} \label{table:DPE}
\begin{tabular*}{\tblwidth}{ p{1.8cm}| p{1.5cm}| p{1.5cm}| p{2cm}}
\bottomrule
Algorithms & qubits (per node) & quantum communication complexity& number of controlled-$U^x$ (per node)\\
\hline
Algorithm \ref{alg:DPE} & $\dfrac{n}{k}+\log_2k+B_u+O(1)$ & $0$ & $\dfrac{n}{k}+\log_2k+O(1)$ \\
\hline
traditional non-iterative PEA & $n+B_u+O(1)$ & 0 & $n+O(1)$ \\\hline
General distributed method (\cite{neumann2020imperfect}) & $
\dfrac{n+B_u}{k}+O(1)$ & $\Omega(n)$ & undefined \\
\toprule
\end{tabular*}
\end{table}


\begin{remark}
In phase estimation algorithm, we have $U|u\rangle=e^{2\pi i\omega}|u\rangle$. In Table \ref{table:DPE}, we abbreviate "phase estimation algorithm" as PEA.  In addition, $B_u$ represents the number of qubits of $|u\rangle$. $n$ means that we want to estimate the first $n$ bits of the phase $\omega$, and $k$ is the number of computing nodes.
\end{remark}

It can be seen that our distributed phase estimation algorithm does not require quantum communication. Compared with the non-iterative phase estimation algorithm, the maximum number of qubits required by a single node of our distributed algorithm is reduced by $(1-\dfrac{1}{k})n-\log_2k-O(1)$. 

\subsection{Complexity of distributed order-finding algorithm}

	In Algorithm \ref{alg:DOFA}, similarly, we choose appropriate values for $l_2,\cdots,l_k$. We make $\left|l_i-(i-1)\cdot\dfrac{2L+2}{k}\right|<1$, $i=2,\cdots,k$. In addition, by utilizing the method mentioned in \cite{jiang2023distributed}, we can transmit $L$ qubits using only one EPR pair. Therefore, the quantum communications in Algorithm \ref{alg:DOFA} will only cause each node to add a maximum of 2 additional qubits. We directly present the following table to show the complexity of Algorithm \ref{alg:DOFA} and other related algorithms.

\begin{table}[width=1\linewidth,cols=4,pos=h]
\caption{Complexity of Algorithm \ref{alg:DOFA} and other related algorithms} \label{table:DOFA}
\begin{tabular*}{\tblwidth}{ p{2cm}| p{1.5cm}| p{1.5cm}| p{1.5cm}}
\bottomrule
Algorithms & qubits (per node) & quantum communication complexity& time complexity\\
\hline
Algorithm \ref{alg:DOFA} & $(2+\dfrac{2}{k})L+\log_2k+O(1)$ & $O(kL)$ & $O(L^3)$ \\\hline
traditional non-iterative OFA & $4L+O(1)$ & 0 & $O(L^3)$ \\\hline
Xiao's algorithm \cite{gang2023distributed} & $3.5L+O(1)$ & $O(L)$ & $O(L^3)$ \\\hline
Yimsiriwattana's distributed algorithm (\cite{yimsiriwattana2004distributed}, general method) & $L+O(1)$ & $O(L^2)$ & $O(L^3)$ \\
\toprule
\end{tabular*}
\end{table}


\begin{remark}
 In Table \ref{table:DPE}, we abbreviate "order-finding algorithm" as OFA. In addition, $L$ is the bit length of the number to be decomposed, and $k$ is the number of compute nodes.
\end{remark}

It can be seen that compared with the non-iterative order-finding algorithm, the maximum number of qubits required by a single node of our distributed algorithm is reduced by $(2-\dfrac{2}{k})L-\log_2k-O(1)$. 

In the following, we analyze the quantum communication complexity of Algorithm \ref{alg:DOFA}. First, Node $A_1$ transmits $L$ qubits from register $C$ to Node $A_2$. Then, Node $A_2$ transmits $L$ qubits from register $C$ to Node $A_3$, and so on.  Finally, Node $A_{k-1}$ transmits $L$ qubits from register $C$ to Node $A_k$. Thus, the quantum communication complexity of Algorithm \ref{alg:DOFA} is $O(kL)$.  So, the quantum communication complexity of our algorithm ($O(kL)$) is better than that of Yimsiriwattana's algorithm ($O(L^2)$). 

In addition, when we take $k=2$ (since Xiao's algorithm only uses $2$ nodes), the number of qubits required for each node of our algorithm is $3L+O(1)$, which is better than that of Xiao's algorithm.

\section{Conclusions} \label{sec:conclusions}

 In this paper, we have proposed a new distributed phase estimation algorithm. Our distributed phase estimation algorithm does not require quantum communication and it reduces the number of control qubits of a single node compared to the non-iterative phase estimation algorithm. It requires $\dfrac{n}{k}+\log_2+O(1)$ control qubits.  Afterwards, we have applied it to form a distributed order-finding algorithm for Shor's algorithm. Compared with the non-iterative Shor's algorithm, the maximum number of qubits required by a single node of our distributed order-finding algorithm is reduced by $(2-\dfrac{2}{k})L-\log_2k-O(1)$ when factoring an $L$-bit integer. It requires $(2+\dfrac{2}{k})L+\log_2k+O(1)$ qubits and its quantum communication complexity is $O(kL)$. 

However, we  have only studied the cases of  non-iterative phase estimation algorithm and non-iterative order-finding algorithm. In future research, we would consider to study the distributed algorithms for the iterative phase estimation algorithm and iterative order-finding algorithm. In addition, it is worthy of further consideration for applying our distributed phase estimation algorithm to HHL algorithm and discrete logarithm algorithm.
	

\section*{Appendix 1: Basic concepts of quantum computing}\label{appendix:quantum computing}
In quantum computing, quantum bits (qubit, for short) are basic units, and a qubit can be represented by a two-dimensional unit column vector, $\begin{bmatrix}\alpha\\ \beta\end{bmatrix}$ $\in \mathbb{C}^2$ (where $\mathbb{C}$ denotes the set of  complex numbers), $\begin{bmatrix}1\\ 0\end{bmatrix}$ and $\begin{bmatrix}0\\ 1\end{bmatrix}$ are two special quantum bits, called the computational basis states.
A general single-qubit state $\begin{bmatrix}\alpha\\ \beta\end{bmatrix}$ can be represented as $|\psi\rangle$ (Dirac notation), that is $|\psi\rangle$= $\begin{bmatrix}\alpha\\ \beta\end{bmatrix}$.
Thus, $|\psi\rangle$=$\alpha|0\rangle$+$\beta|1\rangle$, where $|\alpha|^2$+$|\beta|^2$=1.


A two-qubit system has four possible states: 00, 01, 10, 11. 
The computational basis states of a two-qubit system are $|00\rangle$, $|01\rangle$, $|10\rangle$, $|11\rangle$.
In general, $|ab\rangle$=$|a\rangle\otimes|b\rangle$, where $\otimes$ represents tensor product.
Therefore, any two-qubit state $|\psi\rangle$ can be represented as
$
	|\psi\rangle=\sum\limits_{i=0}^1\sum\limits_{j=0}^1\alpha_{i, j} |ij\rangle\text{,}
$
where $\alpha_{i, j}\in \mathbb{C}$  and
$\sum\limits_{i=0}^1\sum\limits_{j=0}^1\left|\alpha_{i, j}\right|^2=1$.


Furthermore, $n$-qubits can be represented as $|i_1i_2\cdots i_n\rangle$, where $i_j\in\{0,1\}$ and $j=1,2,\cdots,n$. 
Therefore, any $n$-$\rm qubit$ state $|\psi\rangle$ can be represented as
$
	|\psi\rangle=\sum\limits_{i_1,i_2,\cdots, i_n=0}^1\alpha_{i_1, i_2, \cdots, i_n} |i_1 i_2 \cdots i_n\rangle\text{,}
$
where $\alpha_{i_1, i_2, \cdots, i_n}\in \mathbb{C}$, and $\sum\limits_{i_1,i_2,\cdots,i_n=0}^1\left|\alpha_{i_1,i_2,\cdots,i_n}\right|^2=1$, $|i_1i_2\cdots i_n\rangle=|i_1\rangle\otimes|i_2\rangle\otimes\cdots\otimes|i_n\rangle$. 
$\langle\psi|$ represents the conjugate transpose of $|\psi\rangle$.

 Basic quantum gates include CNOT gate, $I$ gate, $Z$ gate,  $X$ gate, $Y$ gates,  Hadamard gate, and their definitions are as follows:
\begin{equation}
	\textrm{CNOT}=|00\rangle\langle 00|+| 01\rangle\langle 01|+|11\rangle\langle 10|+| 10\rangle\langle 11| \text{,}
\end{equation}
\begin{align}
	I&=|0\rangle\langle 0|+| 1\rangle\langle 1|\text{,}\\
	Z&=|0\rangle\langle 0|-| 1\rangle\langle 1|\text{,}\\
	X&=|1\rangle\langle 0|+| 0\rangle\langle 1|\text{,}\\
	Y&=-i(|0\rangle\langle 1|-| 1\rangle\langle 0|)\text{,}\\
	H&=\dfrac{1}{\sqrt{2}}(| 0\rangle+| 1\rangle)\langle 0|+\dfrac{1}{\sqrt{2}}(| 0\rangle-| 1\rangle)\langle 1|\text{.}
\end{align}

The state evolution of a closed quantum system is described by a unitary transformation, such that the transformation from any state $|\psi\rangle$ to state $|\psi'\rangle$  satisfies $|\psi'\rangle=U|\psi\rangle$, where $U$ is a unitary operator.
  Quantum measurement is described by a set of measurement operators $\{M_m\}$, which satisfy the completeness relation 
$\sum\limits_{m}M^\dag_mM_m=I$. These measurement operators act on the state of the system being measured, where $m$ is the measurement outcome. If the quantum state before measurement is $|\psi\rangle$, then the probability of obtaining $m$ is $p(m)=\langle \psi|M^\dag_mM_m|\psi\rangle$, and the state of the system collapses to
\begin{align}
\dfrac{M_m|\psi\rangle}{\sqrt{\langle \psi|M^\dag_mM_m|\psi\rangle}}\text{.}
\end{align}

\section*{Appendix 2: proof of correctness for distributed phase estimation algorithm}\label{sec:appendix}

\begin{lemma}\label{CorrectStep}
Let $s,x$ be two $t$-bit strings ($t\geq 3$). Let $y$ be a $3$-bit string. Suppose $d_t(x,s)\leq 1$ and $d_t(y,s_{[t-2,t]})\leq 1$. Then there only exist one element $b$ in $\{\pm 2,\pm 1,0\}$ such that $$ADD(x_{[t-2,t]},b)=y.$$
Moreover, let $b_1,b_2\in\{\pm1,0\}$ satisfy $ADD(x,b_1)=s$ and $ADD(s_{[t-2,t]},b_2)=y$, it holds that 
\begin{equation}
b=b_1+b_2.
\end{equation}
\end{lemma}
\begin{proof}
 Since $d_t(x,s)\leq 1$ , we have $d_t(x_{[t-2,t]},s_{[t-2,t]})\leq 1$. Then we get
\begin{equation}
d_t(x_{[t-2,t]},y)\leq d_t(x_{[t-2,t]},s_{[t-2,t]})+d_t(s_{[t-2,t]},y) \leq 2. 
\end{equation}
Hence, it is clear that such element $b$ is unique. Moreover, since
\begin{align}
ADD(x,b_1+b_2)&=ADD(ADD(x,b_1),b_2)\\
&=ADD(s,b_2)
\end{align}
we get
\begin{align}
ADD(x_{[t-2,t]},b_1+b_2)&=ADD(s_{[t-2,t]},b_2)\\
&=y
\end{align}
Therefore, $b=b_1+b_2$. The lemma holds.
\end{proof}

\begin{proposition}\label{prop:CorrectAndCombine}
Let $n,k,l_1,\cdots,l_{k+1}$ be postive integers with $1=l_1<\cdots<l_{k+1}=n-2$. Let $s$ be an $n$-bit string and let $x_i$ be an $(l_{i+1}+3-l_i)$-bit string such that 
$$d_{len(x_i)}(x_i,s_{[l_i,l_{i+1}+2]})\leq 1,$$ 
$i=1,\cdots,k$. Let $y=CorrectAndCombine(x_1,\cdots,x_k)$. Then 
$$d_n(y,s)=d_{len(x_k)}(x_k,s_{[l_k,l_{k+1}+2]})$$
holds.
\end{proposition}

\begin{proof}
Since $d_{len(x_i)}(x_i,s_{[l_i,l_{i+1}+2]})\leq 1$, we have 
\begin{equation}
d_{3}\left((x_i)_{[1,3]},s_{[l_i,l_i+2]}\right)\leq 1
\end{equation}
and
\begin{equation}
d_{3}\left((x_i)_{\left[len(x_i)-2,len(x_i)\right]},s_{[l_{i+1},l_{i+1}+2]}\right)\leq 1,
\end{equation}
$i=1,2,\cdots,k$. Let $b_1,b_2\in\{\pm1,0\}$ satisfy 
\begin{equation}
ADD\left((x_k)_{\left[len(x_k)-2,len(x_k)\right]},b_1\right)=s_{[l_{k+1},l_{k+1}+2]},
\end{equation}
\begin{equation}
ADD\left((x_k)_{[1,3]},b_2\right)=s_{[l_{k},l_{k}+2]},
\end{equation}
\begin{equation}
ADD\left((x_{k-1})_{\left[len(x_{k-1})-2,len(x_{k-1})\right]} ,b_3\right)=s_{[l_k,l_k+2]},
\end{equation}
Suppose we input $x_1,\cdots,x_k$ to Algorithm \ref{alg:CorrectAndCombine}. Let $prefix_r, m_r'$, $CorrectionNum_r$ ($r=1,\cdots,k-1$) be the same as those in Algorithm \ref{alg:CorrectAndCombine}. By Lemma \ref{CorrectStep}, we have $CorrectionNum_{k-1}=b_3-b_2$. Combining the Lemma 4 in \cite{gang2023distributed}, we get
\begin{align}
&\hspace{-2.5em}ADD(m_{k-1}',b_1)\\
&\hspace{-2.5em}=ADD(prefix_{k-1},b_2)\circ s_{[l_k+3,l_{k+1}+2]}\\
&\hspace{-2.5em}=ADD\left(ADD(x_{k-1},CorrectionNum_{k-1}),b_2\right)\\
&\circ s_{[l_k+3,l_{k+1}+2]}\\
&\hspace{-2.5em}=ADD(x_{k-1},b_3)\circ s_{[l_k+3,l_{k+1}+2]}\\
&\hspace{-2.5em}=s_{[l_{k-1},l_k+2]} \circ s_{[l_k+3,l_{k+1}+2]}\\
&\hspace{-2.5em}=s_{[l_{k-1},l_{k+1}+2]}.
\end{align}
Hence, 
\begin{align}
d_{len(m_{k-1}')}\left(m_{k-1}',s_{[l_{k-1},l_{k+1}+2]}\right)=|b_1|\\
=d_{len(m_{k}')}\left(m_{k}',s_{[l_{k},l_{k+1}+2]}\right).
\end{align}
By induction, it can be proven that
\begin{align}
\hspace{-2.5em}d_{len(m_{1}')}\left(m_{1}',s_{[1,l_{k+1}+2]}\right)=
d_{len(m_{k}')}\left(m_{k}',s_{[l_{k},l_{k+1}+2]}\right).
\end{align}
Since $y=m_1', m_k'=x_k$ and $l_{k+1}=n-2$, the proposition holds.
\end{proof}

	Finally, we prove Theorem \ref{CorrectnessForDPE}, which shows the correctness of Algorithm \ref{alg:DPE}.

\begin{proof}[\textbf{Proof of Theorem \ref{CorrectnessForDPE}}]
According to Corollary \ref{MeasurePE}, in Algorithm \ref{alg:DPE}, for any $i\in\{1,\cdots,k\}$, the probability of
\begin{equation}\label{eq:dt_mi_less1}
d_{len(m_i)}(m_i,\omega_{\{l_i,l_{i+1}+2\}})\leq 1
\end{equation}
is at least $1-\dfrac{\epsilon}{k}$. Hence, the probability of
Equation (\ref{eq:dt_mi_less1})
holds for all $i\in\{1,\cdots,k\}$ is at least $(1-\dfrac{\epsilon}{k})^k\geq 1-\epsilon$.
Finally by Proposition \ref{prop:CorrectAndCombine}, the theorem holds.
\end{proof}

\printcredits

\section*{Declaration of competing interest}

The authors declare that they have no known competing financial interests or personal relationships that could have appeared to influence the work reported in this paper.

\section*{Data availability}

No data was used for the research described in the article.

\end{document}